\newtheorem{lemma}{\textbf{Lemma}}
\newtheorem{corollary}{\textbf{Corollary}}
\definecolor{lime}{HTML}{A6CE39}
\DeclareRobustCommand{\orcidicon}{
	\begin{tikzpicture}
		\draw[lime, fill=lime] (0,0)
		circle[radius=0.16]
		node[white]{{\fontfamily{qag}\selectfont \tiny \.{I}D}};
	\end{tikzpicture}
	\hspace{-2mm}
}
\xdef\csname orcid\x\endcsname{\noexpand\href{https://orcid.org/\csname orcidauthor\x\endcsname}{\noexpand\orcidicon}}
\begin{document}

\title{EXK-SC: A Semantic Communication Model Based on Information Framework Expansion and Knowledge Collision}

\author{Gangtao Xin\hspace{-1.5mm}\orcidA{} and Pingyi Fan\hspace{-1.5mm}\orcidB{},~\IEEEmembership{Senior Member,~IEEE}

\thanks{This work was supported by the National Key Research and Development Program of China (Grant NO.2021YFA1000500(4)) }
\thanks{Gangtao Xin and Pingyi Fan are with the Department of Electronic Engineering, Tsinghua University, Beijing 100084, China, and also with the Beijing National Research Center for Information Science and Technology, Tsinghua University, Beijing 100084, China (e-mail:fpy@tsinghua.edu.cn)}

}

\markboth{Journal of \LaTeX\ Class Files,~Vol.~14, No.~8, August~2021}%
{Shell \MakeLowercase{\textit{et al.}}: A Sample Article Using IEEEtran.cls for IEEE Journals}

\IEEEpubid{0000--0000/00\$00.00~\copyright~2021 IEEE}

\maketitle

\begin{abstract}
Semantic communication is not focused on improving the accuracy of transmitted symbols, but is concerned with expressing the expected meaning that the symbol sequence exactly carries. However, the measurement of semantic messages and their corresponding codebook generation are still open issues. Expansion, which integrates simple things into a complex system and even generates intelligence, is truly consistent with the evolution of the human language system. We apply this idea to the semantic communication system, quantifying semantic transmission by symbol sequences and investigating the semantic information system in a similar way as Shannon's method for digital communication systems. This work is the first to discuss semantic expansion and knowledge collision in the semantic information framework. Some important theoretical results are presented, including the relationship between semantic expansion and the transmission information rate. We believe such a semantic information framework may provide a new paradigm for semantic communications, and semantic expansion and knowledge collision will be the cornerstone of semantic information theory.

\end{abstract}

\begin{IEEEkeywords}
Semantic information theory, Semantic communications, Information theory, 6G, Game theory.
\end{IEEEkeywords}

\section{Introduction and overview}

\IEEEPARstart{I}{t} is well known that Shannon's information theory \cite{shannon1948mathematical} answers two fundamental questions in digital communication theory \cite{cover1991elements}, namely about the ultimate data compression rate (answer: the entropy $H$) and the ultimate reliable transmission rate of communication (answer: the channel capacity $C$). It also addresses technical implementation problems in digital communication systems, enabling the end user to receive the same symbols as the sender. However, with the ever-increasing demand for intelligent wireless communications, communication architecture is evolving from only focusing on transmitting symbols to the intelligent interconnection of everything	\cite{shi2021semantic}. In
the 1950s, Weaver \cite{weaver1953recent} discussed the semantic problem of communications, and categorized communications into the following three levels:

Level A. How accurately can the symbols of communication be transmitted? (The technical problem.)

Level B. How precisely do the transmitted symbols convey the desired meaning? (The semantic problem.)

Level C. How effectively does the received meaning affect conduct in the desired way? (The effectiveness problem.)

Recently, semantic information theory including semantic communications has attracted much attention. However, how to set up a reasonable and feasible measure of semantic messages is still an open problem, which may be the greatest challenge for the new developments of semantic communication (SC) systems. On the other hand, currently in 6G networks, intelligent interconnections of everything will bring a new paradigm of the communication mode with semantic interaction. In this regard, finding a way to quickly reflect the semantic processing of messages will become one more promising pathway to 6G intelligent networking systems.

This paper proposes a new communication system framework based on information framework expansion and knowledge collision. It extends Shannon's theory of communication (Level A) to a theory of semantic communication (Level B). This work is initially influenced by Choi and Park \cite{choi2022semantic}, but it provides further important contributions, which are as follows:
\begin{itemize}
	\item We generalize the work of Choi and Park from a single fixed semantic type to a dynamic expansion mode, associated with knowledge collision, which can reflect the asynchronous knowledge update processes between the sender and the receiver to some degree. In addition, it also takes into account the effect of channel noise in the new model.
	\item We present a new measure related to the semantic communication system based on the framework with semantic expansion and knowledge collision, called the Measure of Comprehension and Interpretation, which can be used to quantify the semantic entropy of discrete sources.
	\item We discuss the additional gains from semantic expansion and find the relationship between the semantic expansion and the transmission information rate. We demonstrate that knowledge matching of its asynchronous scaling up plays a key role in semantic communications.
\end{itemize}

As a primary work, the system proposed in this paper only focuses on discrete cases and makes some drastic simplifications. We assume that semantics are abstractly generated only from signals and knowledge, without considering other factors. In addition, semantic types, signals, responses, and knowledge instances (explained in Section \ref{sec:related work}) are represented by random variables. Moreover, we do not focus on the effectiveness problem (Level C), which is beyond the scope of this paper. However, we believe that these simplifications are necessary for us to focus on the "core" issues to clearly explain and set up connections within a general semantic information theory. To the best of our knowledge, this is the first study to explore the semantic expansion and the asynchronous scaling up of semantic knowledge. The insights gained from this work may be of great assistance to the development of future intelligent semantic communication systems and 6G.

The rest of this paper is organized as follows. Section \ref{sec:related work} introduces the related work of semantic communications. It summarizes the key concepts and theoretical results of Shannon's information theory and semantic information theory. In Section \ref{sec:models}, we present the new generalized model of semantic communications from two aspects, one is an attempt to set up a close relationship with the Shannon communication model, and the other is to find a feasible modification of the model so that it can exactly reflect the quick surge of semantic communications from the user requirements. In Section \ref{sec:experiment}, we lay out the implementation details of the simulation and discuss the experimental results. Finally, we conclude the paper in Section \ref{sec:conclusion}.

\section{Related Work}
\label{sec:related work}

In this section, we introduce the framework of semantic information theory, specifically the key concept, semantic entropy. We then go on to explore the works that we are concerned with, semantic communication as a signaling game with correlated knowledge bases.

\subsection{Preliminaries}

Although Shannon's information did not address the semantic problem of communications, it provided important insights into the message-processing techniques associated with the focus of both the sender and the receiver's attention. Thus, in this subsection, we briefly introduce the main concepts and theoretical results of Shannon's information theory.

{\bf{Entropy.}} It is a measure of the uncertainty of a random variable \cite{cover1991elements}. Let $X$ be a discrete random variable with alphabet $\mathcal{X}$ and probability mass function $p(x)=\text{Pr}\{X=x\},x\in \mathcal{X}$. The \emph{entropy} $H(X)$ of $X$ is defined as follows:
\begin{equation}
	H(X) = -\sum_{x\in \mathcal{X}} p(x) \log p(x).
\end{equation}

{\bf{Mutual information.}} It is a measure of the amount of information that one random variable contains about another. That is, mutual information can be seen as the reduction in the uncertainty of one random variable due to the knowledge of the other. Let us consider two random variables $X$ and $Y$ with a joint probability mass function $p(x,y)$ and the marginal probability mass functions $p(x)$ and $p(y)$. The mutual information between $X$ and $Y$ is defined as follows:
\begin{equation}
	I(X;Y) = \sum_{x\in \mathcal{X}} \sum_{y \in \mathcal{Y}}  p(x,y) {p(x,y) \over p(x)p(y)}.
\end{equation}

{\bf{Channel capacity.}} The channel capacity is the maximum amount of mutual information given the conditional transit probability from $X$ to $Y$, $p(y|x)$. It is defined by
\begin{equation}
	C = \max_{p(x)} I(X;Y),
\end{equation}
where $X$ and $Y$ are the input and output of the channel, respectively.


{\bf{Source coding theorem.}} As $N\to \infty$, $N$ i.i.d. (independent identically distribution) random variables with entropy $H(X)$ can be compressed into a little more than $NH(X)$ bits to represent them completely, the information loss can be negligible in this case. Conversely, if they are compressed into fewer than $NH(X)$ bits, there will be errors with a non-zero probability.

{\bf{Channel coding theorem.}} For a discrete memoryless channel, all rates below capacity $C$ are achievable. Conversely, any sequence of codes with a negligible error must obey the rule that its transmission rate $R$ is not greater than the channel capacity, $R \leq C$.


\subsection{Semantic Information Theory}

In the literature, there are some works that focus on the semantic information theory. Carnap and Bar-Hillel \cite{carnap1952outline} were the first to propose the concept of semantic entropy, using logical probability rather than statistical probability to measure the semantic entropy of a sentence. For simplicity, it is necessary here to clarify that we use $p$ and $m$ to denote statistical probability and logical probability in this paper, respectively. The logical probability of a sentence is measured by the likelihood that the sentence is true in all possible situations \cite{bao2011towards}. Then, the semantic information of the message $e$ is defined as follows:
\begin{equation}
	H_S(e) = -\log_2(m(e)),
\end{equation}
where $m(e)$ is the logical probability of $e$. However, this metric led to a paradox that any fact has an infinite amount of information when it contradicts itself, i.e., $H_S(e\wedge \neg e ) = \infty$. Floridi \cite{floridi2004outline} solved the paradox in Carnap and Bar-Hillel's proposal \cite{carnap1952outline}, adopting the relative distance of semantics to measure the amount of information. 

Bao \emph{et al.} \cite{bao2011towards} defined the semantic entropy of a message $x$ as follows:
\begin{equation}
	\label{Eq:sc entropy -1}
	H_S(x) = -\log_2(m(x)),
\end{equation}
where the logical probability of $x$ is given by
\begin{equation}
	m(x) = {\mu(W_x) \over \mu(W)}= {\sum\limits_{w \in W,w \models x} \mu (w) \over \sum\limits_{w \in W}\mu(w)}.
\end{equation}
$W$ is the symbol set of a source, $\models$ is the proposition satisfaction relation, and $W_x$ is the set of models for $x$. In addition, $\mu$ is a probability measure, $\sum\limits_{w\in W}\mu(w)=1$. 

Besides the logical probability, there are some definitions of semantic entropy based on different backgrounds \cite{qin2021semantic}\cite{shi2021new}. D'Alfonso \cite{d2011quantifying} utilized the notion of truthlikeness to quantify semantic information. Kolchinsky and Wolpert \cite{kolchinsky2018semantic} defined semantic entropy as the syntactic information that a physical system has about its environment, which is necessary for the system to maintain its own existence. Kountouris and Pappas \cite{kountouris2021semantics} advocated for assessing and extracting the semantic value of data at three different granularity levels, namely the microscopic scale, mesoscopic scale, and macroscopic scale.

Analogous to Shannon's information theory, some related theories at the semantic level, such as semantic channel capacity, semantic rate distortion, and information bottleneck \cite{tucker2022towards}, are also explored. Based on (\ref{Eq:sc entropy -1}), Bao \emph{et al.} further proposed the \emph{semantic channel coding theorem} \cite{bao2011towards}. Moreover, Liu \emph{et al.} \cite{liu2021rate} formulated rate distortion in semantic communication as follows:
\begin{equation}
	R(D_s,D_w) = \min I(W;\hat{X},\hat{W})
\end{equation}
where $D_s$ is the semantic distortion between source, $X$, and recovered information, $\hat{X}$, at the receiver. $D_w$ is the distortion between semantic representation, $W$, and received semantic representation, $\hat{W}$. In addition, some works \cite{chen2022rate,stavrou2022rate} generalized it, leading to the development of semantic rate distortion theory.

In recent years, the fusion of semantic communication algorithms and learning theory \cite{weng2021semantic,huang2021deep,zhou2021semantic,zhang2022deep} has driven the changes in communication architecture. Although these explorations did not fully open the door to semantic communication, they provided us with the motivation to move forward theoretically.

\subsection{Semantic Communication as a Lewis Signaling Game with Knowledge Bases}
\label{sec:sc}

Recently, Choi and Park \cite{choi2022semantic} proposed an SC model based on the Lewis signaling game, which provided some meaningful results. Their work motivated us to conduct this study to some degree. Further, we make a generalization of this model by adding the scaling up update module of the knowledge instance at both the sender and the receiver, which may be asynchronous, and closer to real semantic communications in the era of intelligent communication, i.e., human to robots, robots to robots, machines to machines and cells to cells in bio-molecular communications, etc. Thus, we provide further details.

Suppose there is a semantic communication system where Alice is the sender and Bob is the receiver. Let $T \in \mathcal {T}$ denote the semantic type that includes semantic information or messages. Alice wishes to transmit $T$ to Bob by sending a signal $S\in \mathcal{S}$, and Bob chooses its response $R \in \mathcal{R}$. Both Alice and Bob utilize their local knowledge bases $\mathcal{K}_A$ and $\mathcal{K}_B$, respectively. The semantic architecture can be described as follows:

\vspace{-12pt}
\begin{align*}
	&K_A~~~~~~~~~~~~K_B \\ 
	&\downarrow ~~~~~~~~~~~~~~\downarrow \\ 
	&T~\rightarrow~S~\rightarrow~R~(=\hat{T}),
\end{align*}
where $K_A \in \mathcal{K}_A$ and $K_B \in \mathcal{K}_B$ are knowledge instances. If $R=T$, the communication process can be seen as successful. Moreover, the reward for one communication is defined as follows:
\begin{equation}
	u=\left\{
	\begin{aligned}
		&1, \quad \text{if }R=T\text{;}\\
		&0, \quad \text{otherwise.}\\
	\end{aligned}
	\right
	.
	\label{Eq:payoff}
\end{equation}
Our object is to maximize the average reward. We use \emph{the success rate of semantic agreement (SRSA)} to represent it. 

The knowledge base can be regarded as the side information. Alice has her knowledge base $\mathcal{K}_A$. The instance of Alice's knowledge base at each time is $K_A \in \mathcal{K}_A$, which affects the generation of semantic types. On the other hand, Bob has his knowledge base $\mathcal{K}_B$ and its instance is $K_B \in \mathcal{K}_B$. $K_B$ is the "core", which is used to infer the intended message together with the received signal.


Based on these foundations, Choi and Park \cite{choi2022semantic} derived some meaningful results. On the one hand, the similarity of the two knowledge instances plays a crucial role in SC. With limited bandwidth, the communication quality is regulated by $I(K_A; K_B|S)$. On the other hand, the SRSA is dependent on the similarity of knowledge bases at both ends of the communication, the sender and the receiver. However, a close relationship with Shannon's communication model was not established. In this work, we will construct a generalization of the model by adding the knowledge scaling up update module at both the sender and the receiver, which may be asynchronous, but it can closely reflect the quick surge of semantic communications in many potential applications.

%

\section{Semantic Communication Models}
\label{sec:models}
In this section, we will present the generalized model of semantic communications from two different aspects, one is an attempt to set up a close relationship with the Shannon communication model, and the other is to find a feasible modification of the model so that it can exactly reflect the quick surge of semantic communications from the user requirements. In this regard, we first set up a basic model, and then extend it by adding the knowledge base update functional module with information framework expansion and knowledge collision.

\subsection{Motivation}
Shannon's communication system is concerned with the accurate transmission of symbols over channels. However, it does not take into account the differences in the knowledge backgrounds of the two parties, which affect whether the communication can convey the desired meaning. In other words, the matching of knowledge can also be treated as another special channel, although it may not actually exist in an explicit form. In this work, we denote it as a virtual channel. We want to fill this gap and propose an intelligent communication system with the coexistence of the real channel and virtual channel, making it easier to formulate the information framework. Thus, we consider not only the physical channel for symbol transmission but also the virtual channel for the transfer of knowledge between two parties. Moreover, the knowledge of both parties is also evolving, so we take into account the asynchronous scaling up updates of knowledge and semantic expansion.

\subsection{Basic Model}
\label{sec:model1}

Let us first observe the \emph{channel}, which is the physical medium of  message exchange, and plays a key role in the model setup of a complete semantic communication system. A \emph{communication channel} is a system in which the output depends probabilistically on its input \cite{cover1991elements}. For simplification, we still use Alice and Bob as the two parties	participating in the communication. Suppose Alice and Bob are at opposite ends of a memoryless channel, which can transmit physical signals. The channel is said to be memoryless if the probability distribution of the output depends only on the input at that time and it is conditionally independent of previous inputs or outputs. That is, the output of the channel is only related to the input at the current moment and no feedback processing is considered here.

We use $S$ and $\hat{S}$ to represent the input and output of the channel. Under the interference of the noise, the signal $S$ becomes $\hat{S}$ through channel transmission. Specifically, Alice observes the input $S$ of the channel, and Bob observes the output $\hat{S}$. Let $K_A$ and $K_B$ represent the knowledge instance possessed by Alice and Bob, respectively. On the one hand, Alice encodes $S$ as $T$ with $K_A$. This means that Alice uses the knowledge instance $K_A$ to process the signal $S$, resulting in the semantic type $T$, i.e., $(S, K_A) \to T$. On the other hand, Bob decodes $\hat{S}$ into the response $R$ with $K_B$, which can be expressed as $(\hat{S}, K_B) \to R$. The architecture of this process can be expressed as follows:

\begin{equation}\label{Eq:structure}
	\begin{split}
		T~\leftarrow~ &~S~~~\mathop{\longrightarrow}\limits^{(a)}~~~\hat{S}~\rightarrow ~R~(=\hat{T}) \\
		&~\uparrow~~~~~~~~~~\uparrow \\
		&~K_A~~\mathop{\dashrightarrow}\limits^{(b)}~~K_B
	\end{split}
\end{equation}

We denote the process $(S,K_A) \to T$ as semantic encoding and $(\hat{S},K_B)\to \hat{T}$ as semantic decoding. They indicate the understanding of meaning at the semantic level by both parties. The definitions of symbols are summarized as follows:
\begin{itemize}
	\item \emph{Semantic Types}: $T=t_k,k=1,\ldots,|\mathcal{T}|$, is a random variable that is generated by Alice.
	\item \emph{Signals}: $S=s_l, l=1,\ldots,|\mathcal{S}|$, is a signal that Alice sends to Bob.
	\item \emph{Responses}: $R=r_n,n=1,\ldots,|\mathcal{R}|$, is a response that Bob chooses.
	\item \emph{Knowledge instances}: $K_A \in \mathcal{K}_A$ and $K_B \in \mathcal{K}_B$ represent the knowledge instance used by Alice and Bob during semantic encoding and decoding, respectively.
\end{itemize}

\begin{lemma}
	\label{lemma:1}
	Under the knowledge instances $K_A$ and $K_B$, the mutual information between $T$ and $\hat{T}$ is obtained as follows:
	\begin{equation}
		\label{Eq:Lemma 1}
		I(T;\hat{T}) = I(S;\hat{S}) + I(K_A;K_B|S, \hat{S}) + I(S;K_B|\hat{S})+ I(\hat{S};K_A|S) 
	\end{equation}
	where $I(K_A;K_B|S,\hat{S})$, $I(S;K_B|\hat{S})$ and $I(\hat{S};K_A|S)$ represent the
	conditional mutual information.	
\end{lemma}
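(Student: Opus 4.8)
The plan is to reduce the semantic quantity $I(T;\hat{T})$ to a mutual information between the underlying pairs $(S,K_A)$ and $(\hat{S},K_B)$, and then to unfold that expression by two applications of the chain rule. Concretely, I would first argue that $I(T;\hat{T}) = I(S,K_A;\hat{S},K_B)$, and afterwards expand the right-hand side into the four terms listed in \eqref{Eq:Lemma 1}.

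For the first step I would exploit the encoding and decoding maps declared in the model, namely $(S,K_A)\to T$ and $(\hat{S},K_B)\to\hat{T}$. Since $T$ is produced from $(S,K_A)$ and $\hat{T}$ from $(\hat{S},K_B)$, and since these are treated as semantic codings, I would establish a one-to-one correspondence between $T$ and the pair $(S,K_A)$, and likewise between $\hat{T}$ and $(\hat{S},K_B)$. As mutual information is invariant under such bijections, this gives $I(T;\hat{T}) = I(S,K_A;\hat{S},K_B)$. I expect this to be the main obstacle: if the maps are merely deterministic but not invertible, the data-processing inequality yields only $I(T;\hat{T}) \le I(S,K_A;\hat{S},K_B)$, so the exact identity forces me to make the injectivity of the semantic encoder and decoder explicit, or equivalently to verify that $H(T\mid S,K_A)=0$ together with $H(S,K_A\mid T)=0$, and symmetrically for $\hat{T}$, $(\hat{S},K_B)$.

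Given that reduction, the second step is a mechanical double use of the chain rule. I would split the first argument first, writing
\begin{equation*}
I(S,K_A;\hat{S},K_B) = I(S;\hat{S},K_B) + I(K_A;\hat{S},K_B\mid S),
\end{equation*}
and then expand each summand across the second argument,
\begin{equation*}
I(S;\hat{S},K_B) = I(S;\hat{S}) + I(S;K_B\mid\hat{S}), \qquad I(K_A;\hat{S},K_B\mid S) = I(K_A;\hat{S}\mid S) + I(K_A;K_B\mid S,\hat{S}).
\end{equation*}
Collecting the four pieces and using the symmetry $I(K_A;\hat{S}\mid S)=I(\hat{S};K_A\mid S)$ reproduces exactly $I(S;\hat{S}) + I(K_A;K_B\mid S,\hat{S}) + I(S;K_B\mid\hat{S}) + I(\hat{S};K_A\mid S)$.

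It is worth stressing what the argument does and does not require. The chain-rule expansion is an algebraic identity valid for \emph{any} joint distribution of $(S,\hat{S},K_A,K_B)$, so no independence, Markov, or memorylessness hypotheses enter there; the memorylessness of the channel and any correlation structure of the virtual channel are irrelevant to this particular identity. All of the modelling content is concentrated in the equivalence $I(T;\hat{T})=I(S,K_A;\hat{S},K_B)$, so the cleanest write-up would isolate that step as the only nontrivial ingredient and let the decomposition follow routinely.
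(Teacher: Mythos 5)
Your proposal is correct and follows essentially the same route as the paper: the paper likewise identifies $I(T;\hat{T})$ with $I(S,K_A;\hat{S},K_B)$ (via $H(T)=H(K_A,S)$ and $H(T|\hat{T})=H(K_A,S|K_B,\hat{S})$) and then unfolds it by the chain rule into the same four terms, grouping as $I(S;K_B,\hat{S})+I(K_A;K_B,\hat{S}|S)$ exactly as you do. If anything, you are more careful than the paper on the one nontrivial step: the paper simply asserts the entropy identities from the maps $(S,K_A)\to T$ and $(\hat{S},K_B)\to\hat{T}$, whereas you correctly point out that determinism alone gives only a data-processing inequality, and that the exact equality requires the semantic encoder and decoder to be informationally invertible, i.e.\ $H(T\mid S,K_A)=H(S,K_A\mid T)=0$ and symmetrically for $\hat{T}$.
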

\begin{proof}
	Since $(S,K_A)\to T$ and $(\hat{S},K_B) \to \hat{T}$, we obtain $H(T)=H(K_A,S)$ and $H(\hat{T})=H(K_B,\hat{S})$. With the characteristics of entropy and mutual information, it follows that
	\begin{align}
		I(T;\hat{T}) & = H(T)-H(T|\hat{T}) \\
		&= H(K_A,S)- H(K_A,S|K_B,\hat{S}) \\
		&= H(S) + H(K_A|S)-H(S|K_B,\hat{S})-H(K_A|K_B,S,\hat{S}) \\
		&= H(S)-H(S|K_B,\hat{S})+H(K_A|S)-H(K_A|K_B,S,\hat{S}) \\
		&= I(S;K_B,\hat{S}) + I(K_A;K_B,\hat{S}|S) \\
		&= I(S;\hat{S}) + I(K_A;K_B|S, \hat{S}) + I(S;K_B|\hat{S})+ I(\hat{S};K_A|S), 
	\end{align}
	which completes the proof.	
\end{proof}

The mutual information between $T$ and $\hat{T}$ reflects the effectiveness of communication. It indicates the highest rate in bits per channel use at which information can be sent with an arbitrarily low probability of error. We note that $I(T;\hat{T})$ consists of the following three terms: 

\begin{enumerate}
	\item $I(S;\hat{S})$, the mutual information between the input and output of the channel. It corresponds to the \emph{channel capacity} in Shannon's information theory. 
	\item $I(K_A;K_B|S,\hat{S})$, the mutual information between $K_A$ and $K_B$ given $S$ and $\hat{S}$. It indicates the amount of information that two knowledge instances contain about each other when signals are known.
	\item $I(S;K_B|\hat{S})+I(\hat{S};K_A|S)$, the conditional mutual information between $S$ and $K_B$, $\hat{S}$ and $K_A$. 
\end{enumerate}

From Eq. (\ref{Eq:Lemma 1}), we know that if Bob wants to fully understand what Alice means, the communication process needs to meet two conditions, one is the accuracy of the received symbols during channel transmission, and the other is the matching degree of the knowledge instances of both parties. In other words, i) The transmission $S\to\hat{S}$  should be reliable. It indicates the effect of channel noise on signal transmission. Moreover, the physical carrier of this process actually exists, which we call an \emph{explicit channel}. ii) The two knowledge instances should be similar. Although there is no actual transmission between $K_A$ and $K_B$, we assume that there is a virtual channel that reflects the probabilistic relationship between knowledge instances, which we call an \emph{implicit channel}. The \emph{explicit channel} and \emph{implicit channel} together form the transmission medium of a communication system. Specifically, the characteristics of the explicit channel determine the value of the first term in Eq. (\ref{Eq:Lemma 1}), and the implicit channel determines the second term. In addition, the last term is affected by both explicit and implicit channels. 

If the communication system has only the explicit channel, it degenerates to the Shannon case, and $I(T;\hat{T})=I(S;\hat{S})$. Furthermore, we are interested in the following three special cases (demonstrated in Appendix \ref{sec:appendix}).

\begin{enumerate}[I.]
	\item{\textbf{The explicit channel is noiseless}, i.e., $S=\hat{S}$. 
		In this 
		setting, the formula (\ref{Eq:Lemma 1}) can be simplified to
		\begin{equation}
			\label{Eq:speical case 1}
			I(T;\hat{T}) = H(S) + I(K_A;K_B|S),
		\end{equation}
		which is consistent with the result in \cite{choi2022semantic}. The mutual information between $T$ and $\hat{T}$ is subjected to the entropy of the signal $S$ and the conditional mutual information between $K_A$ and $K_B$ given $S$. In semantic communications or emergent communications, the signal $S$ is usually limited. According to Eq. (\ref{Eq:speical case 1}), one can increase the conditional mutual information between $K_A$ and $K_B$ to solve the problem from another perspective. In other words, in the case of limited technical bandwidth, the SC performance can be improved by increasing the similarity between two knowledge instances.

	}
	\item{\textbf{The implicit channel is noiseless}, i.e. $K_A=K_B$. This 
		means that Alice and Bob have the same knowledge instance, so the communication performance will not be affected by the difference in the background of both parties. $I(T;\hat{T})$ can be expressed as follows:
		\begin{align}
			&I(T;\hat{T}) = I(S;\hat{S}) + H(K_A) - I(K_A;S;\hat{S}) \label{Eq:speical case 2a}\\
			&I(T;\hat{T}) = I(S;\hat{S})+H(K_B)-I(K_B;S;\hat{S}) \label{Eq:speical case 2b},
		\end{align}
		where $I(K_A;S;\hat{S})=I(K_A;S)-I(K_A;S|\hat{S})$ is the mutual information between $K_A$, $S$ and $\hat{S}$. Even when the implicit channel is noiseless, $I(T;\hat{T})$ is not exactly equal to $I(S;\hat{S})$, and we cannot ignore the influence of knowledge in SC. Compared to Shannon's channel capacity, it adds a term, regulated by the relationship between the knowledge instance and the signal.
		\begin{corollary}
			If the implicit channel is noiseless, the mutual information between $T$ and $\hat{T}$ satisfies
			\begin{equation}
				I(S;\hat{S}) \mathop{\leq}\limits^{(a)} I(T;\hat{T}) \mathop{\leq}\limits^{(b)} I(S;\hat{S}) + H(K_A).
			\end{equation}
			Moreover, if $K_A$ is a function of $S$ and $\hat{S}$, the left equation $(a)$ holds; if $K_A$ is independent of the signal $S$ or $\hat{S}$, the right equation $(b)$ holds.
		\end{corollary}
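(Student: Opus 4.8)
The plan is to read both inequalities off the identity (\ref{Eq:speical case 2a}) already established for the noiseless implicit channel. Starting from $I(T;\hat{T}) = I(S;\hat{S}) + H(K_A) - I(K_A;S;\hat{S})$, the two claimed inequalities are exactly equivalent to the double bound $0 \le I(K_A;S;\hat{S}) \le H(K_A)$, so it suffices to sandwich the interaction information $I(K_A;S;\hat{S})$ between $0$ and $H(K_A)$. I would treat the two sides separately, since they are not of the same difficulty.

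Inequality $(a)$ is the easy half and needs no assumption beyond the definitions. Expanding $I(K_A;S;\hat{S}) = I(K_A;S) - I(K_A;S|\hat{S})$ and using $I(K_A;S|\hat{S}) \ge 0$ together with $I(K_A;S) \le H(K_A)$ gives $I(K_A;S;\hat{S}) \le H(K_A)$, hence $I(T;\hat{T}) \ge I(S;\hat{S})$. Equivalently, I would rewrite the correction term as $H(K_A|S) + I(K_A;S|\hat{S})$, a sum of two manifestly nonnegative quantities, which makes the lower bound transparent.

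The main obstacle is inequality $(b)$, i.e. showing $I(K_A;S;\hat{S}) \ge 0$. The interaction information of three arbitrary variables is not sign-definite and can be negative, so this bound cannot hold in full generality and must exploit the model. The key is the memoryless-channel property: because $\hat{S}$ is produced from $S$ alone, $K_A \to S \to \hat{S}$ is a Markov chain, whence $I(K_A;\hat{S}|S) = 0$. Applying the symmetric form $I(K_A;S;\hat{S}) = I(K_A;\hat{S}) - I(K_A;\hat{S}|S) = I(K_A;\hat{S}) \ge 0$ then yields $(b)$, and in fact collapses the whole identity to the clean expression $I(T;\hat{T}) = I(S;\hat{S}) + H(K_A|\hat{S})$, from which $0 \le H(K_A|\hat{S}) \le H(K_A)$ reproduces both bounds at once. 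Identifying and justifying this Markov relation is the crux of the argument.

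Finally, I would obtain the equality cases from the reduced form $I(T;\hat{T}) = I(S;\hat{S}) + H(K_A|\hat{S})$. The left equality $(a)$ is attained precisely when $H(K_A|\hat{S}) = 0$, i.e. $K_A$ is determined by the received signal (a fortiori a function of $S$ and $\hat{S}$); the right equality $(b)$ is attained precisely when $I(K_A;\hat{S}) = 0$, i.e. $K_A \perp \hat{S}$, and since $K_A \to S \to \hat{S}$, independence of $K_A$ from $S$ already forces independence from $\hat{S}$, so independence from either signal suffices. The only point requiring care is checking that these tightness conditions are consistent with the Markov relation invoked in step $(b)$.
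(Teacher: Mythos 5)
Your proposal is correct, and it follows the same skeleton as the paper's proof (sandwiching the interaction information in the identity $I(T;\hat{T}) = I(S;\hat{S}) + H(K_A) - I(K_A;S;\hat{S})$), but it is genuinely more rigorous on the one step where the paper's argument has a real gap. The paper justifies the upper bound $(b)$ by invoking ``the non-negativity of mutual information, $I(K_A;S;\hat{S})\geq 0$,'' as if it were a general axiom; as you correctly observe, interaction information is not sign-definite, so this appeal is invalid on its face. (Concretely: if $K_A$ and $S$ are independent uniform bits and $\hat{S}=S\oplus K_A$, then $I(K_A;S;\hat{S})=-1$ bit and the bound $(b)$ fails, so \emph{some} structural assumption is indispensable.) Your identification of the Markov chain $K_A \to S \to \hat{S}$, coming from the fact that the channel acts on $S$ alone, which yields $I(K_A;S;\hat{S}) = I(K_A;\hat{S}) \geq 0$, is exactly the missing ingredient; the paper simply never supplies it. Your route also buys two further things the paper's proof does not: the clean reduced form $I(T;\hat{T}) = I(S;\hat{S}) + H(K_A|\hat{S})$, and a treatment of the equality cases, which the paper states in the corollary but never proves.

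One point deserves explicit flagging rather than the parenthetical ``a fortiori'' you use: your exact condition for equality in $(a)$, namely $H(K_A|\hat{S})=0$, is \emph{strictly stronger} than the corollary's hypothesis if that hypothesis is read as ``$K_A$ is a function of the pair $(S,\hat{S})$.'' For instance, take $K_A=S$ over a noisy binary symmetric channel: $K_A$ is a function of $(S,\hat{S})$, the Markov chain holds, yet $I(T;\hat{T}) = H(S) > I(S;\hat{S})$, so equality in $(a)$ fails. The corollary's sufficient condition is therefore correct only under the reading ``$K_A$ is a function of $S$ and also a function of $\hat{S}$ separately,'' in which case $H(K_A|\hat{S})=0$ is part of the hypothesis and your argument closes it. So your proof not only repairs the paper's justification of $(b)$ but also exposes an imprecision in the statement of the equality condition for $(a)$; it would strengthen your write-up to say this outright instead of smoothing it over.
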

		\begin{proof}
			The non-negativity of mutual information, $I(K_A;S;\hat{S})\geq 0$, means that $I(T;\hat{T})$ can be upper bounded by $I(S;\hat{S})+H(K_A)$. Because the mutual information is lower than the entropy, we obtain $H(K_A)-I(K_A;S;\hat{S})\geq 0$. By combining these results, we obtain
			\begin{equation}
				I(S;\hat{S}) \leq I(T;\hat{T}) \leq I(S;\hat{S}) + H(K_A),
			\end{equation}
			which completes the proof of the bound of $I(T;\hat{T})$.
		\end{proof}
		
	}
	\item{\textbf{Both explicit and implicit channels are noiseless}, i.e., $S=\hat{S},K_A=K_B$. In this setting, the mutual information between $T$ and $\hat{T}$ equals the entropy of $T$ or $\hat{T}$, 
		\begin{equation}
			\label{Eq:speical case 3}
			I(T;\hat{T}) = H(T) = H(\hat{T}).
		\end{equation}
		Eq. (\ref{Eq:speical case 3}) indicates that Bob can perfectly understand the meaning of Alice when channels are noiseless. That is, no information is lost.
	}
\end{enumerate}

The following result shows the bounds of SRSA, constrained by the characteristics of the explicit and implicit channels.
\begin{lemma}
	The SRSA satisfies
	\begin{equation}
		\text{SRSA}=\Pr\{T = \hat{T}\} \leq 1 - { H(K_A,S|K_B,\hat{S})-1   \over  \log|\mathcal{T}|}
		\label{Eq:SRSA}
	\end{equation}
\end{lemma}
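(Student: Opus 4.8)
The plan is to connect the success rate $\Pr\{T=\hat{T}\}$ to the conditional entropy $H(T\mid\hat{T})$ via Fano's inequality, and then to rewrite that conditional entropy in terms of the knowledge instances and signals using the substitutions established in Lemma~\ref{lemma:1}. Recall from the proof of Lemma~\ref{lemma:1} that $(S,K_A)\to T$ and $(\hat{S},K_B)\to\hat{T}$ give $H(T)=H(K_A,S)$ and, crucially, $H(T\mid\hat{T})=H(K_A,S\mid K_B,\hat{S})$. So the quantity $H(K_A,S\mid K_B,\hat{S})$ appearing on the right-hand side of~(\ref{Eq:SRSA}) is exactly the equivocation $H(T\mid\hat{T})$ dressed in the variables of the model.

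First I would invoke Fano's inequality for the pair $(T,\hat{T})$. Writing $P_e=\Pr\{T\neq\hat{T}\}=1-\text{SRSA}$ and recalling $T$ takes values in $\mathcal{T}$, Fano gives
\begin{equation}
	H(T\mid\hat{T}) \leq h(P_e) + P_e\log\bigl(|\mathcal{T}|-1\bigr) \leq 1 + P_e\log|\mathcal{T}|,
\end{equation}
where $h(\cdot)$ is the binary entropy function, bounded above by $1$, and I have loosened $\log(|\mathcal{T}|-1)$ to $\log|\mathcal{T}|$. Next I would substitute the identity $H(T\mid\hat{T})=H(K_A,S\mid K_B,\hat{S})$ on the left, obtaining $H(K_A,S\mid K_B,\hat{S})\leq 1+P_e\log|\mathcal{T}|$. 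Rearranging for $P_e$ yields
\begin{equation}
	P_e \geq {H(K_A,S\mid K_B,\hat{S})-1 \over \log|\mathcal{T}|},
\end{equation}
and since $\text{SRSA}=1-P_e$, this is precisely the claimed bound.

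The main obstacle, I expect, is not any single algebraic step but justifying the application of Fano's inequality in this setting, since $\hat{T}$ is not literally an estimator formed from an observation of $T$ through a channel in the textbook sense — here $\hat{T}=R$ arises from the joint processing of $\hat{S}$ and $K_B$. I would address this by noting that Fano's inequality holds for \emph{any} two random variables on the same alphabet regardless of how $\hat{T}$ is generated, depending only on the joint distribution and the event $\{T\neq\hat{T}\}$; the decoding chain $(\hat{S},K_B)\to\hat{T}$ simply fixes that joint law. A secondary point worth flagging is that the bound is only meaningful (i.e.\ nontrivial, with the right-hand side below $1$) when $H(K_A,S\mid K_B,\hat{S})>1$; otherwise the stated inequality still holds but is vacuous. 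I would mention this to keep the interpretation honest, but it does not affect the validity of the derivation.
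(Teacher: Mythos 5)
Your proof is correct and takes essentially the same approach as the paper's: both rest on Fano's inequality applied with error event $\{T\neq\hat{T}\}$ and equivocation $H(K_A,S\mid K_B,\hat{S})$, identified with $H(T\mid\hat{T})$ through the correspondences $(S,K_A)\to T$ and $(\hat{S},K_B)\to\hat{T}$. Your write-up merely makes explicit the substitution step that the paper compresses into the phrase ``$(K_B,\hat{S})$ can be seen as an estimator for $(K_A,S)$.''
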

\begin{proof}
	$(K_B,\hat{S})$ can be seen as an estimator for $(K_A,S)$. Let $P_e =$ \text{Pr}$\{ T\neq \hat{T}\}$; then, with Fano's Inequality, we obtain
	\begin{equation}
		P_e \geq { H(K_A,S|K_B,\hat{S})-1   \over  \log|\mathcal{T}|}
	\end{equation}
	Since SRSA$=1-P_e$, we can obtain Eq. (\ref{Eq:SRSA}), which completes the proof.
\end{proof}

\subsection{EXK-SC Model}
\label{sec:exp-sc}
Let $S_1$ and $S_2$ denote the input signals of the explicit channel. $\hat{S}_1$ and $\hat{S}_2$ are their corresponding output signals. Similarly, $K_A^1$ and $K_A^2$ represent Alice's knowledge instances. Alice encodes $S_1$ as $T_1$ with $K_A^1$, and $S_2$ as $T_2$ with $K_A^2$. On the other hand, Bob uses $K_B^1$ and $K_B^2$, decoding $\hat{S}_1$ and $\hat{S}_2$ into $\hat{T}_1$ and $\hat{T}_2$, respectively. In the basic model, the encoding and decoding processes of the two transmissions are
\begin{align}
	S_1,K_A^1 & \to T_1~~~~\hat{S}_1,K_B^1\to \hat{T}_1 \label{Za}\\
	S_2,K_A^2 & \to T_2~~~~\hat{S}_2,K_B^2\to \hat{T}_2 \label{Zb}
\end{align}

It is well known that the integration of lots of simple things can be expanded to a complex system and can even generate intelligence. Such an expansion process is consistent with the evolution of the human language system. Based on a similar idea, we extend the basic model proposed in Section \ref{sec:sc} using expansion. In this context, Alice wishes to send an expansion of multiple signals to Bob. We first consider the case of two signals, which can be generalized to more instances. Now, what Alice sends is expanded from $S = S_1$ to $S = S_1 \oplus T_2$.  We use $\oplus$ to denote semantic expansion. Specifically, the expansion of signals only represents their semantic combination. For instance, '\emph{
	Shannon published a paper}' expands to '\emph{Shannon published a paper in {The Bell system technical journal}} \cite{shannon1948mathematical}'. 

As known, expansion often implies collision and fusion. Similarly, the expansion of signals corresponds to the collision of knowledge bases in this work. For Alice, we use $K_A=K_A^1 \mathop{\odot}\limits^{\alpha} K_A^2$ to represent the collision process, where $\odot$ denotes the collision and $\alpha$ is the \emph{collision factor}. Specifically, $\alpha$ is between $0$ and $1$, determined by the task. The collision factor reflects the role of $K_A^2$ compared to $K_A^1$ when the collision occurs. It can also be understood as the relative proportion of the contribution to the newly generated knowledge instance. The process of collision represents the knowledge scaling up updates, which may be synchronous or asynchronous. Similarly, we use $K_B= K_B^1 \mathop{\odot}\limits^{\beta} K_B^2$ to represent the knowledge collision of Bob, where $\beta$ is a collision factor. The expansion and collision process can proceed continuously as follows:
\begin{align}
	&S = S_1 \oplus S_2 \oplus S_3  \cdots \oplus S_{n}  \label{Za1} \\
	&K_A = K_A^1 \mathop{\odot}\limits^{\alpha_1} K_A^2 \mathop{\odot}\limits^{\alpha_2} K_A^3  \cdots \mathop{\odot}\limits^{\alpha_{n-1}} K_A^{n}  \label{Za1} \\
	&K_B= K_B^1 \mathop{\odot}\limits^{\beta} K_B^2 \mathop{\odot}\limits^{\beta_2} K_B^3 \cdots
	\mathop{\odot}\limits^{\beta_{n-1}} K_B^{n}. \label{Zb1}
\end{align}

Without loss of generality, the one-step expansion architecture of semantic communications is described as follows:

\begin{equation}\label{Eq:structure}
	\begin{split}
		&S~~~~~~~\mathop{\longrightarrow}\limits^{(a)}~~~~~~~\hat{S} \\
		&\Uparrow~~~~~~~~~~~~~~~~~~\Downarrow \\
		T~\leftarrow~ ~S_1& \oplus S_2~~~~~~~~~~~~\hat{S_1}\oplus \hat{S_2}~\rightarrow ~R~(=\hat{T}) \\
		&\uparrow \begin{footnotesize} \text{(c)} \end{footnotesize}~~~~~~~~~~~~~~~~\uparrow \begin{footnotesize} \text{(d)} \end{footnotesize}\\
		K_A^1 &\mathop{\odot}\limits^{\alpha} K_A^2  ~~~\mathop{\dashrightarrow}\limits^{(b)}~~K_B^1 \mathop{\odot}\limits^{\beta} K_B^2
	\end{split}
\end{equation}

We named it EXK-SC. Moreover, $H(S_1\oplus S_2, K_A^1 \mathop{\odot}\limits^{\alpha} K_A^2 )$ is called the Measure of Comprehension and Interpretation (MCI), which reflects the generation and evolution of semantics. It should be noted that $T$ is not a simple logic combination of $T_1$ and $T_2$, i.e., $T \neq T_1 \oplus T_2$. For example, $T_1$ is \emph{'Apple Inc.'}, it is a company and $S_1$ can be \emph{'Apple'}. $T_2$ is \emph{'the thirteenth generation'}, it is a number and $S_2$ can be \emph{'thirteen'}. However, their collision may give rise to a new word called \emph{'iphone'}, which is a mobile communication product. In particular, $T$ reflects the result of $S$ under the influence of knowledge collision.

Based on these definitions above, we obtain some new results.
\begin{lemma} The mutual information between $T$ and $\hat{T}$ is given by
	\begin{equation}
		I(T;\hat{T}) = I(S;\hat{S}) + I(K_A^1 \mathop{\odot}\limits^{\alpha} K_A^2;K_B^1 \mathop{\odot}\limits^{\beta} K_B^2|S, \hat{S}) + I(S;K_B^1 \mathop{\odot}\limits^{\beta} K_B^2|\hat{S})+ I(\hat{S};K_A^1 \mathop{\odot}\limits^{\alpha} K_A^2|S).
		\label{Eq:mutual information}
	\end{equation}
\end{lemma}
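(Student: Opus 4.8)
The plan is to recognize that the EXK-SC model is, after an appropriate bundling of variables, structurally identical to the basic model of Section \ref{sec:model1}, so that Lemma \ref{lemma:1} can be invoked almost verbatim. Concretely, I would introduce the composite signals $S = S_1 \oplus S_2$ and $\hat{S} = \hat{S}_1 \oplus \hat{S}_2$, together with the collided knowledge instances $K_A = K_A^1 \mathop{\odot}\limits^{\alpha} K_A^2$ and $K_B = K_B^1 \mathop{\odot}\limits^{\beta} K_B^2$. With these identifications, the claimed right-hand side of (\ref{Eq:mutual information}) is exactly the right-hand side of (\ref{Eq:Lemma 1}) written in the expanded notation, so it suffices to show that the hypotheses under which Lemma \ref{lemma:1} was established continue to hold in the expanded setting.

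The decisive step is therefore to verify the two deterministic encoding and decoding relations. From the one-step expansion architecture, Alice forms $T$ from the expanded signal $S_1 \oplus S_2$ and the collided instance $K_A^1 \mathop{\odot}\limits^{\alpha} K_A^2$, i.e. $(S, K_A) \to T$; symmetrically Bob forms $\hat{T}$ from $\hat{S}_1 \oplus \hat{S}_2$ and $K_B^1 \mathop{\odot}\limits^{\beta} K_B^2$, i.e. $(\hat{S}, K_B) \to \hat{T}$. As in the proof of Lemma \ref{lemma:1}, these functional dependencies give $H(T) = H(K_A, S)$ and $H(\hat{T}) = H(K_B, \hat{S})$, which are the only properties of $T$ and $\hat{T}$ that the earlier derivation actually used.

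Granting these two identities, I would simply reproduce the chain of entropy manipulations from the proof of Lemma \ref{lemma:1}: expand $I(T;\hat{T}) = H(T) - H(T|\hat{T})$, substitute $H(K_A, S)$ and its conditional counterpart, split the joint entropies by the chain rule, and regroup into the four mutual-information terms. Replacing $K_A$ by $K_A^1 \mathop{\odot}\limits^{\alpha} K_A^2$ and $K_B$ by $K_B^1 \mathop{\odot}\limits^{\beta} K_B^2$ throughout then yields (\ref{Eq:mutual information}) directly.

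I expect the main obstacle to be conceptual rather than computational: one must argue that the collision operator $\odot$ and the expansion operator $\oplus$ each produce a well-defined random variable on a common probability space, so that $K_A^1 \mathop{\odot}\limits^{\alpha} K_A^2$ and $S_1 \oplus S_2$ may legitimately be treated as single knowledge and signal variables. Once this is granted, the entire derivation is a relabeling of Lemma \ref{lemma:1}; the collision factors $\alpha$ and $\beta$ enter only through the joint distribution of the collided instances and do not alter the algebraic identity.
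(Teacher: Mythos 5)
Your proposal is correct and matches the paper's intent exactly: the paper's own ``proof'' of this lemma is simply the remark that it is similar to Lemma \ref{lemma:1} and is omitted, and your argument---bundling $S_1 \oplus S_2$, $\hat{S}_1 \oplus \hat{S}_2$, $K_A^1 \mathop{\odot}\limits^{\alpha} K_A^2$, $K_B^1 \mathop{\odot}\limits^{\beta} K_B^2$ into single random variables satisfying $(S,K_A)\to T$ and $(\hat{S},K_B)\to \hat{T}$, then rerunning the entropy chain of Lemma \ref{lemma:1}---is precisely the relabeling the authors had in mind. Your added remark that one must first grant that $\oplus$ and $\odot$ define bona fide random variables on a common probability space is a point the paper glosses over, and making it explicit only strengthens the write-up.
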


\begin{proof}
	It is similar to that of Lemma \ref{lemma:1}, we omit the proof here.
\end{proof}	

Eq. (\ref{Eq:mutual information}) indicates that besides the characteristics of explicit and implicit channels, the relationship between $\alpha$ and $\beta$ also affects the performance of communication.

\begin{lemma}
	\label{lemma:gain}
	When the explicit channel is noiseless, the gain brought by semantic expansion is given by
	\begin{equation}
		\begin{split}
			{{I(T;\hat{T}) - I(T_1;\hat{T}_1)} } = (1+\gamma) ({H(S)} -H(S_1)), \\
		\end{split}
	\end{equation}
	where
	\begin{equation}
		\gamma =  { {I(K_A^1 \mathop{\odot}\limits^{\alpha} K_A^2;K_B^1 \mathop{\odot}\limits^{\beta} K_B^2 \vert S_1 \oplus S_2) - I(K_A^1;K_B^1|S_1)} \over {H(S_1 \oplus S_2)} -H(S_1)}.
	\end{equation}
\end{lemma}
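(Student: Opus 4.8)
The plan is to reduce the claim to the noiseless-channel identity of Special Case~I, Eq.~(\ref{Eq:speical case 1}), and then difference the expanded and unexpanded systems. First I would apply that identity to the expanded system: since the explicit channel is noiseless, $S=\hat{S}$, and the EXK-SC architecture sets the effective signal to $S=S_1\oplus S_2$ with collided instances $K_A=K_A^1\mathop{\odot}\limits^{\alpha}K_A^2$ and $K_B=K_B^1\mathop{\odot}\limits^{\beta}K_B^2$. Substituting these into (\ref{Eq:speical case 1}) gives
\begin{equation}
  I(T;\hat{T}) = H(S_1\oplus S_2) + I\!\left(K_A^1\mathop{\odot}\limits^{\alpha}K_A^2;\,K_B^1\mathop{\odot}\limits^{\beta}K_B^2 \;\middle|\; S_1\oplus S_2\right).
\end{equation}

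Next I would apply the same identity to the single-signal baseline of (\ref{Za}), regarding $(S_1,K_A^1,K_B^1)$ as an independent instance of the basic model with $S_1=\hat{S}_1$, which yields $I(T_1;\hat{T}_1)=H(S_1)+I(K_A^1;K_B^1\mid S_1)$. Subtracting the baseline from the expanded expression separates the gain into a signal-entropy term and a knowledge term,
\begin{equation}
  I(T;\hat{T}) - I(T_1;\hat{T}_1) = \bigl(H(S_1\oplus S_2)-H(S_1)\bigr) + \Delta,
\end{equation}
where $\Delta$ denotes the difference of the two conditional mutual informations and is precisely the numerator appearing in the definition of $\gamma$. Using $H(S)=H(S_1\oplus S_2)$ and the definition of $\gamma$ to write $\Delta=\gamma\bigl(H(S_1\oplus S_2)-H(S_1)\bigr)$, I would factor out the common signal-entropy difference to arrive at $(1+\gamma)(H(S)-H(S_1))$, which is the claimed identity.

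The only real obstacle I anticipate is justifying that (\ref{Eq:speical case 1}) applies verbatim to both systems. For the expanded system this amounts to reading the expanded signal and the collided knowledge instances as the effective $(S,K_A,K_B)$ of the basic model, exactly as the EXK-SC architecture is defined; for the baseline it requires the first transmission in (\ref{Za}) to be a bona fide instance of the basic model so that Special Case~I is available. Once these identifications are granted, the remainder is purely algebraic, and the denominator of $\gamma$ has evidently been chosen so that the factorization of $H(S_1\oplus S_2)-H(S_1)$ is immediate.
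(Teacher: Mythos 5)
Your proposal is correct and follows essentially the same route as the paper: both apply the noiseless-channel identity $I(T;\hat{T}) = H(S) + I(K_A;K_B|S)$ to the expanded system and to the single-signal baseline, and then the identity follows from the definition of $\gamma$ by pure algebra (the paper forms the ratio $\bigl(I(T;\hat{T})-I(T_1;\hat{T}_1)\bigr)/\bigl(H(S)-H(S_1)\bigr)$ and shows it equals $1+\gamma$, whereas you subtract and factor, which is the same computation). Your explicit attention to why the identity applies verbatim to both systems is a point the paper leaves implicit, but it does not change the argument.
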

\begin{proof} We note that
	\begin{equation}
		I(T_1;\hat{T}_1) =  H(S_1)+ I(K_A^1;K_B^1|S_1) .
	\end{equation}
	Then, 
	\begin{align}
		&{I(T;\hat{T}) - I(T_1;\hat{T}_1)} \over {H(S)} -H(S_1)  \\
		& = {{H(S) + I(K_A^1 \mathop{\odot}\limits^{\alpha} K_A^2;K_B^1 \mathop{\odot}\limits^{\beta} K_B^2 \vert S) - H(S_1) - I(K_A^1;K_B^1|S_1)} \over {H(S_1 \oplus S_2)} -H(S_1)} \\
		&= 1 + { {I(K_A^1 \mathop{\odot}\limits^{\alpha} K_A^2;K_B^1 \mathop{\odot}\limits^{\beta} K_B^2 \vert S) - I(K_A^1;K_B^1|S_1)} \over {H(S_1 \oplus S_2)} -H(S_1)} \\
		& = 1+\gamma.
	\end{align}
	We have 
	\begin{equation}
		\begin{split}
			{{I(T;\hat{T}) - I(T_1;\hat{T}_1)} } = (1+\gamma) ({H(S)} -H(S_1)), \\
		\end{split}
	\end{equation}
	which completes the proof.
\end{proof}

Lemma \ref{lemma:gain} shows that the semantic enhancement, $I(T;\hat{T})-I(T_1;\hat{T}_1)$, is determined not only by the knowledge scaling up but also by the cost of the information rate over the explicit physical channel, $H(S)-H(S_1)$. It exactly illustrates the relationship between the semantic expansion and the transmission information rate. 

\begin{lemma}
	\label{lemma:bound}
	When the explicit channel is noiseless, the mutual information between $T$ and $\hat{T}$ is bounded by 
	\begin{equation}
		{1 \over 2}{({H(S_1) + H(S_2)})} \leq I(T;\hat{T}) \leq H(K_A^1 \mathop{\odot}\limits^{\alpha} K_A^2) + H(S)
		\label{Eq:bound}
	\end{equation}
\end{lemma}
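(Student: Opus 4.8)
The plan is to reduce everything to the noiseless-explicit-channel identity already derived in special case~I. When $S=\hat{S}$, Eq.~(\ref{Eq:speical case 1}) applies verbatim with the expanded quantities $S=S_1\oplus S_2$, $K_A=K_A^1\mathop{\odot}\limits^{\alpha}K_A^2$ and $K_B=K_B^1\mathop{\odot}\limits^{\beta}K_B^2$, giving
\begin{equation}
	I(T;\hat{T}) = H(S) + I(K_A^1\mathop{\odot}\limits^{\alpha}K_A^2;\, K_B^1\mathop{\odot}\limits^{\beta}K_B^2 \mid S).
\end{equation}
With this single expression in hand, both inequalities in~(\ref{Eq:bound}) follow by bounding the conditional mutual information term from above and below, so I would split the proof into the upper and lower estimates.

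For the upper bound, I would use the standard facts that conditional mutual information never exceeds the corresponding conditional entropy, and that conditioning only reduces entropy:
\begin{equation}
	I(K_A^1\mathop{\odot}\limits^{\alpha}K_A^2;\, K_B^1\mathop{\odot}\limits^{\beta}K_B^2 \mid S) \leq H(K_A^1\mathop{\odot}\limits^{\alpha}K_A^2 \mid S) \leq H(K_A^1\mathop{\odot}\limits^{\alpha}K_A^2).
\end{equation}
Substituting into the identity immediately yields $I(T;\hat{T})\leq H(S)+H(K_A^1\mathop{\odot}\limits^{\alpha}K_A^2)$, which is exactly the right-hand side of~(\ref{Eq:bound}).

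For the lower bound, I would first drop the nonnegative conditional mutual information term to get $I(T;\hat{T})\geq H(S)=H(S_1\oplus S_2)$, and then relate $H(S_1\oplus S_2)$ to $H(S_1)$ and $H(S_2)$. The key modeling observation is that the expansion $S_1\oplus S_2$ carries both constituent signals, so each $S_i$ is recoverable from $S$; hence $H(S)\geq H(S_i)$ for $i=1,2$, i.e.\ $H(S)\geq\max\{H(S_1),H(S_2)\}$. Since the maximum of two numbers dominates their average, $\max\{H(S_1),H(S_2)\}\geq\frac{1}{2}(H(S_1)+H(S_2))$, and chaining the inequalities delivers the left-hand side of~(\ref{Eq:bound}).

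The step I expect to be the main obstacle is making the entropy relation $H(S_1\oplus S_2)\geq\max\{H(S_1),H(S_2)\}$ rigorous, because $\oplus$ has so far only been described semantically as a \emph{combination} rather than defined as a precise map on the signal alphabets. The cleanest route is to adopt, as part of the model, the assumption that the expansion is informationally injective in each argument (equivalently $H(S_1\oplus S_2)=H(S_1,S_2)$), from which the recoverability of each $S_i$---and hence the required inequality---is immediate; the whole lemma then rests on this interpretation of semantic expansion rather than on any further information-theoretic subtlety.
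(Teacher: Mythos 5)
Your proof is correct and takes essentially the same route as the paper's: both rest on the noiseless-explicit-channel identity $I(T;\hat{T})=H(S)+I(K_A^1\mathop{\odot}\limits^{\alpha}K_A^2;K_B^1\mathop{\odot}\limits^{\beta}K_B^2\mid S)$, obtain the upper bound from conditional mutual information being at most $H(K_A^1\mathop{\odot}\limits^{\alpha}K_A^2)$ (conditioning reduces entropy), and obtain the lower bound from non-negativity of the conditional mutual information together with $H(S_1\oplus S_2)\geq H(S_i)$, $i=1,2$, averaged. You are in fact more careful than the paper, which simply asserts that ``expansion creates more possibilities'' and omits the rest; your observation that this inequality requires $\oplus$ to be informationally injective (each $S_i$ recoverable from $S$) makes explicit the modeling assumption the paper leaves implicit.
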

\begin{proof}
	Since expansion would create more possibilities, leading to an increase in uncertainty, we can obtain
	\begin{equation}
		H(S_1 \oplus S_2) \geq H(S_1) 
		~~~~ H(S_1 \oplus S_2) \geq H(S_2).
	\end{equation}
	Because of the \emph{non-negativity of entropy} and \emph{conditioning reduces entropy}, Eq. (\ref{Eq:bound}) can be derived directly. We omit the proof.
\end{proof}
Lemma \ref{lemma:bound} shows the bounds of the semantic communication rate in the perfect explicit channel mode. The upper bound is composed of the entropy rate over the explicit channel and the collision of knowledge instances of the sender, which is reasonable, as expected.

\section{Experiment and Numerical Results}
\label{sec:experiment}
In this section, we use SRSA to measure the performance of SC, especially the impact that asynchronous knowledge scaling up has on the system.

\textbf{Basic Model.} We use Q-learning in \cite{catteeuw2014limits} to complete semantic encoding and decoding, which may reflect the continuous semantic learning process. Let $\vert \mathcal{S} \vert = M = 2$, $\vert \mathcal{K}_A \vert =\vert \mathcal{K}_B \vert =L=2$, $\vert \mathcal{T} \vert = L^2M^2$. In addition, $S$, $K_A$ and $K_B$ are uniformly distributed. We would like to discuss the impact of the characteristics of the explicit channel and implicit channel on SC. We use the binary symmetric channel (BSC) to denote the explicit channel. The BSC is shown in Figure \ref{Fig:bsc}, which shows the probabilistic relationship between $S$ and $\hat{S}$, with the error probability $\epsilon_1$.

\begin{figure}[!t]
	\centering
	\includegraphics[width=2in]{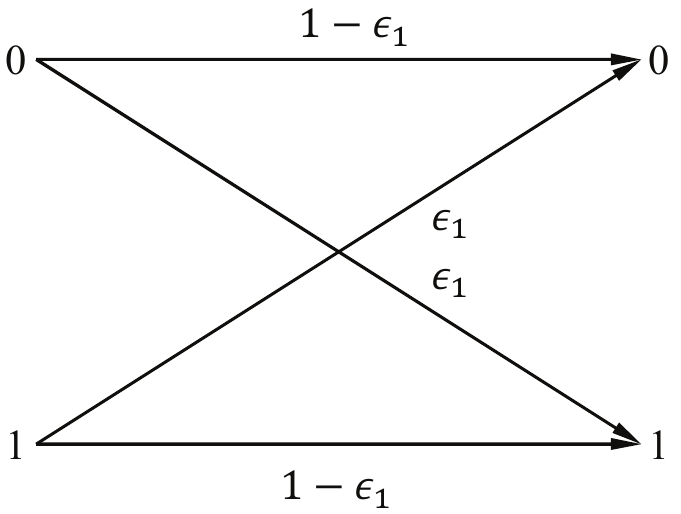}
	\caption{Binary symmetric channel.}
	\label{Fig:bsc}
\end{figure}
For the implicit channel, we assume the correlation between $K_A$ and $K_B$ satisfies 
\begin{equation*}
	{K_B} = \begin{cases}
		K_A,&{\text{with probability}}\ {1-\epsilon_2 } \\ 
		U,  &{\text{with probability }} \epsilon_2 ,
	\end{cases}
\end{equation*}
where $U \sim \text{Unif}\{ 1,L\}$ is an independent random variable. $\epsilon_2$ illustrates the discrepancy between the knowledge bases at both parties of communications, $0<\epsilon_2<1$. In Figure \ref{Fig:error2}, we show the SRSA with $\epsilon_1 \in $ $[0,0.5]$ and $\epsilon_2 \in $ $[0,0.5]$. When $\epsilon_1=\epsilon_2=0$, SRSA reaches the maximum 1. As $\epsilon_1$ and $\epsilon_2$ increase, SASA will decrease in both directions. This indicates that two channels jointly determine the quality of communication. For high-quality SC, it is necessary to meet the condition that $\hat{S}=S$ and $K_B=K_A$ with a high probability. 
\begin{figure}[!t]
	\centering
	\includegraphics[width=4in]{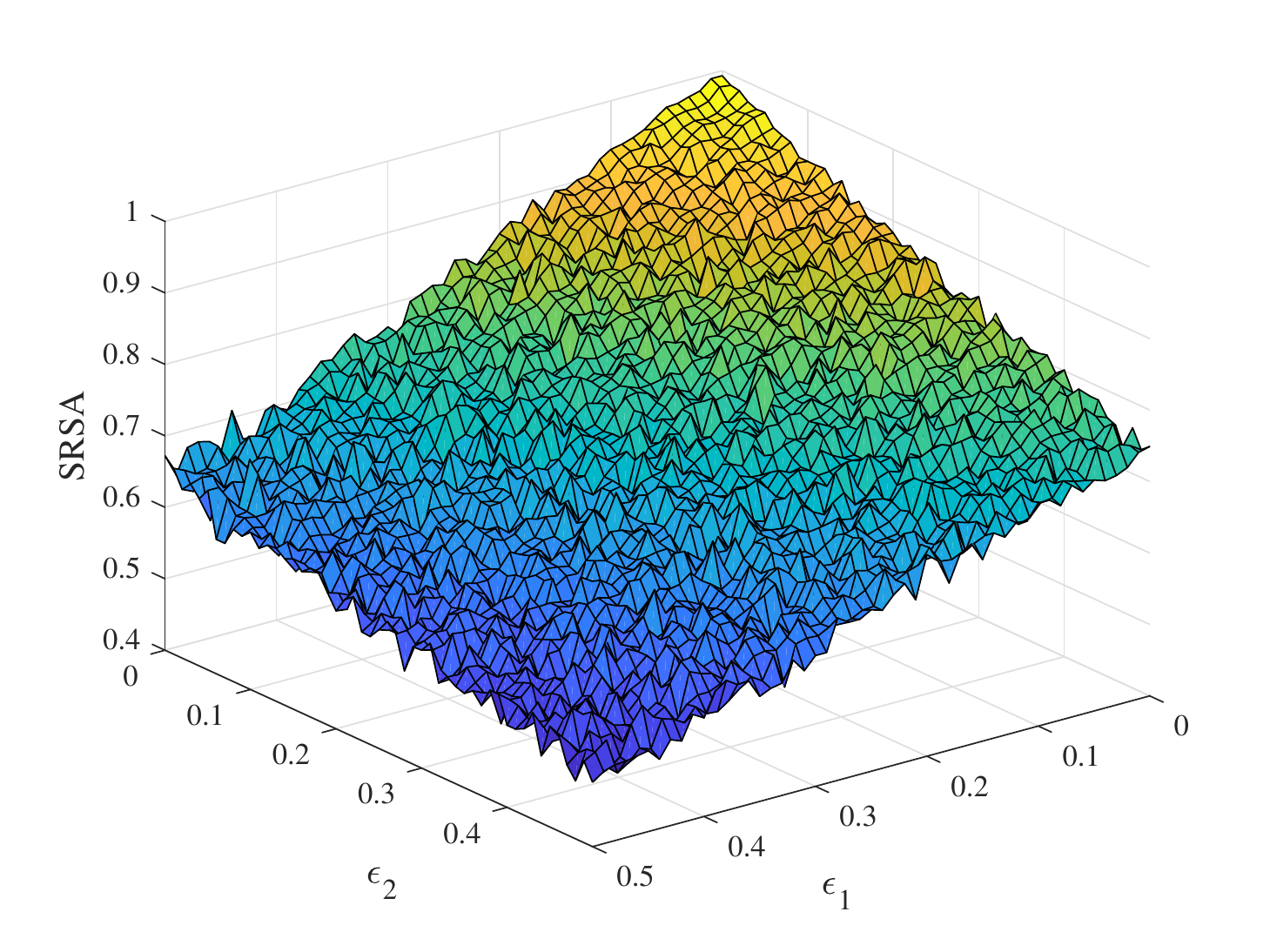}
	\caption{The simulation results of SRSA with the explicit channel error probability $\epsilon_1$ and the implicit channel error probability $\epsilon_2$.}
	\label{Fig:error2}
\end{figure}

In addition, we discuss the impact of explicit and implicit channels on SC, respectively. We take the channel error $\epsilon_1$ and $\epsilon_2$ as the horizontal axis, and the SRSA entering the stable region as the vertical axis. Figure \ref{Fig:1-single}a shows the results of the explicit channel. As $\epsilon_1$ increases, the mean value of SRSA first decreases and then has a gradual increase. When $\epsilon_1$ = 0 or 1, the communication performance is the best and the mean is close to 1. Moreover, it falls to a low point when $\epsilon_1=0.5$. On the other hand, the trend of variance is completely opposite to that of the mean. These results are consistent with the description of channels in Shannon's information theory. Figure \ref{Fig:1-single}b shows the results of the implicit channel. As $\epsilon_2$ increases, the mean value of SRSA keeps decreasing, and the variance increases. It reflects the impact of knowledge misalignment on SC.

\begin{figure}[!t]
	\subfloat[]{
		\includegraphics[width=2.75in]{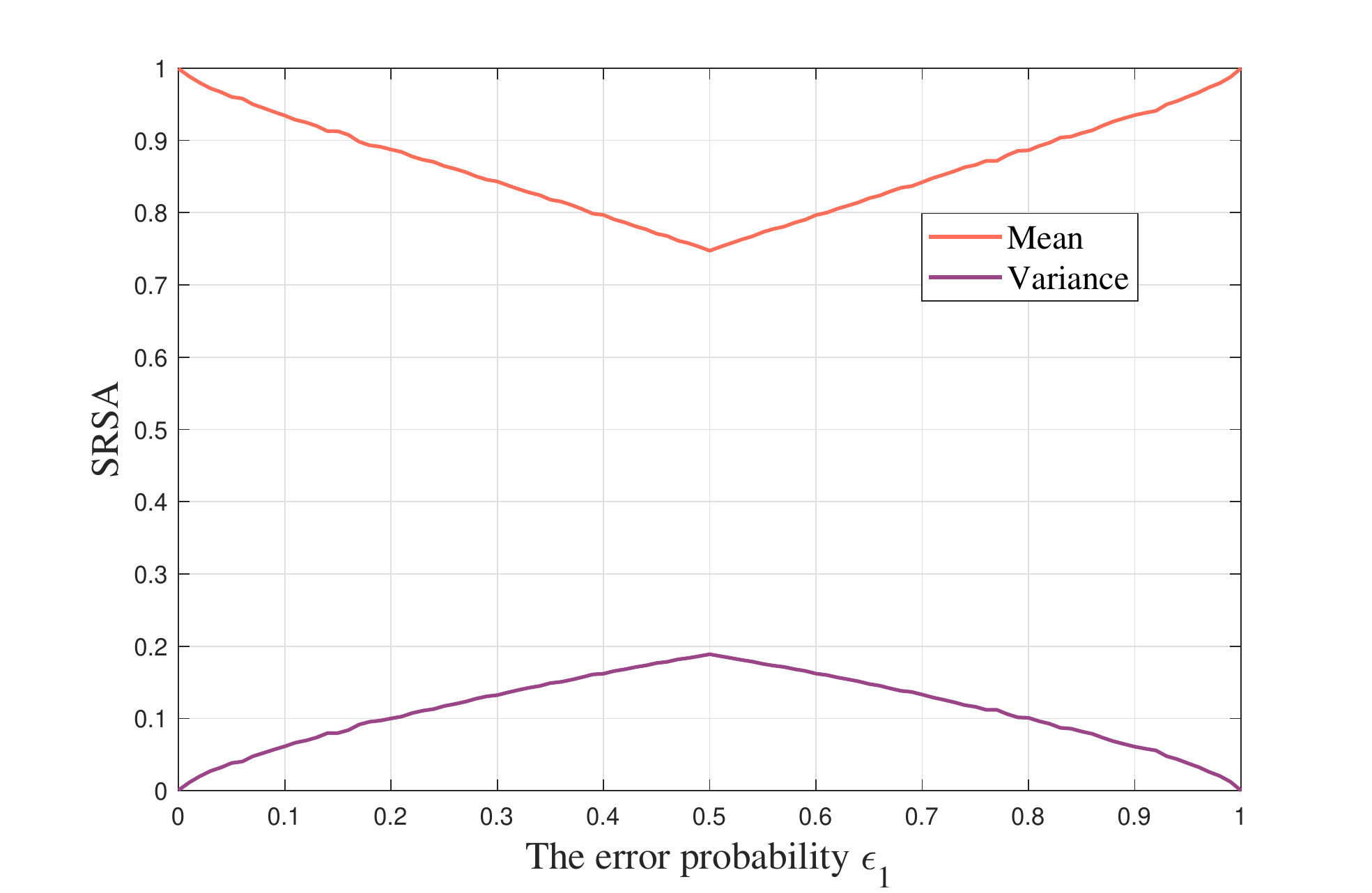}%
		\label{fig:1-I}}
	\hfil
	\subfloat[]{
		\includegraphics[width=2.75in]{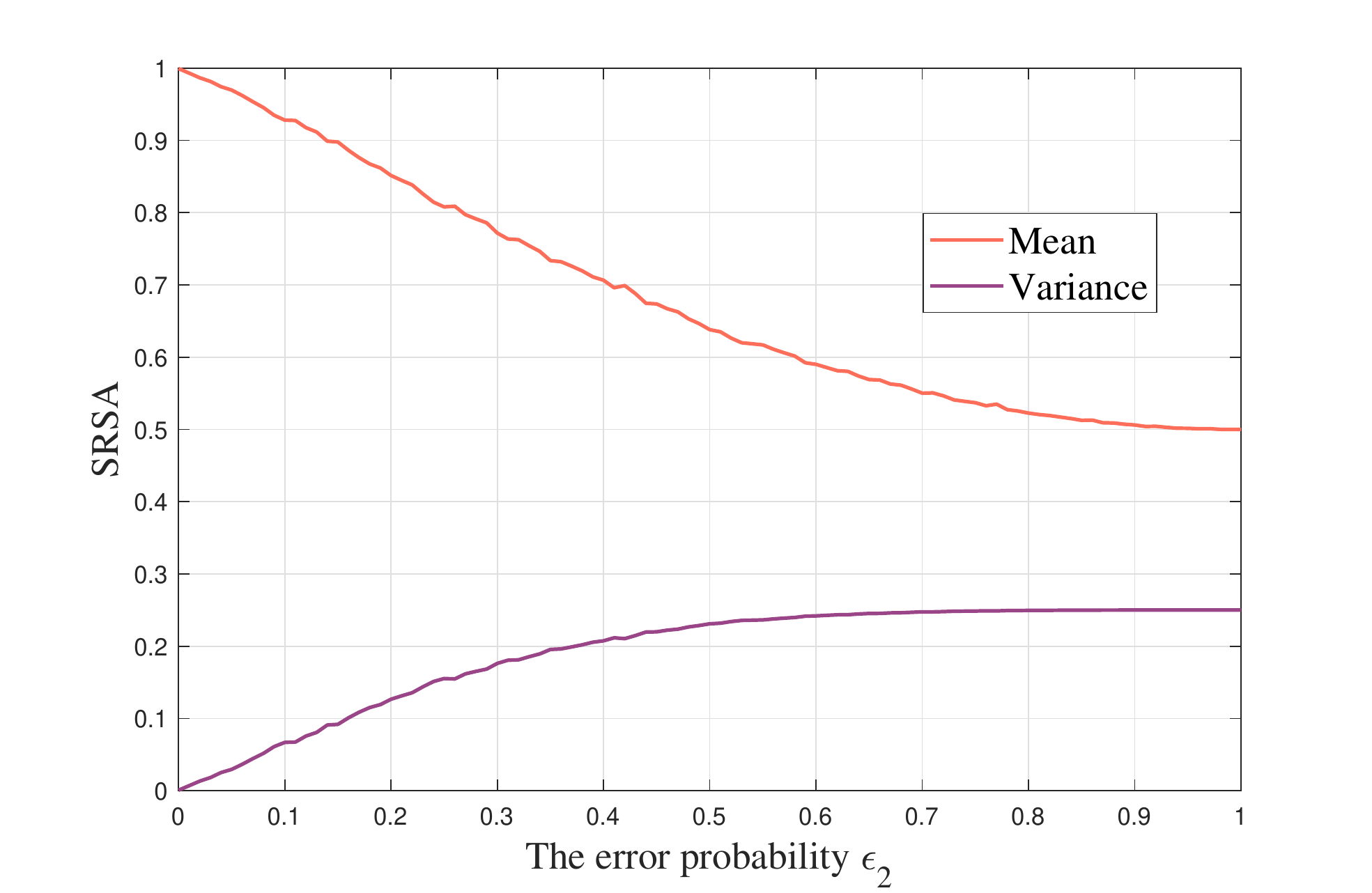}%
		\label{fig:1-II}}
	\caption{A comparison of the impact of explicit and implicit channels on SC. (a) Explicit channel. (b) Implicit channel.}
	\label{Fig:1-single}
\end{figure}

\textbf{EXK-SC Model.} We want to explore how semantic communication quality varies in the context of different relationships between the receiver and the sender's knowledge instance. For simplicity, we assume that the explicit channel is noiseless, so we can focus on knowledge updates. That is, SRSA varies with the relationship of $K_B^1, K_B^2, \beta$, and $K_A^1, K_A^2, \alpha$. We categorize it into four cases. 
\begin{itemize}
	\item Case \uppercase\expandafter{\romannumeral1}: $K_B^1=K_A^1$, $K_B^2=K_A^2$ and $\beta=\alpha$. The implicit channel is noiseless. Bob has the same asynchronous knowledge scaling-up updates mode as Alice. That is, the receiver has all the knowledge of the sender.
	\item Case \uppercase\expandafter{\romannumeral2}: $K_B^1\neq K_A^1$ (with error probability 0.5), $K_B^2=K_A^2$ and $\beta=\alpha$. The receiver has partial knowledge of the sender.
	\item Case \uppercase\expandafter{\romannumeral3}: $K_B^1=K_A^1$, $K_B^2 \neq K_A^2$ (with error probability 0.5) and $\beta=\alpha$. The receiver has partial knowledge of the sender.
	\item Case \uppercase\expandafter{\romannumeral4}: $K_B^1=K_A^1$, $K_B^2=K_A^2$ and $\beta={1 \over 2} \alpha$. The collision factors are not equal to this. 
\end{itemize}
The symbols $K_A^1, K_A^2, \alpha, K_B^1, K_B^2, \beta$ are consistent with the definitions in Section \ref{sec:exp-sc}.
Figure \ref{fig_sim} illustrates the simulation results of SRSA with the number of runs, where Figure \ref{fig_sim}a is the mean and Figure 
\ref{fig_sim}b is the variance. In all four cases, the mean of SRSA gradually converges to a stable region. Case \uppercase\expandafter{\romannumeral1} is close to 1, which implies that when the receiver has all the knowledge of the sender, it can express exactly the same semantics as the sender. In other words, Bob has the same asynchronous knowledge scaling-up updates mode as Alice, which facilitates the success of SC. The curves of case \uppercase\expandafter{\romannumeral2} and \uppercase\expandafter{\romannumeral3} are almost the same, but the value of case \uppercase\expandafter{\romannumeral2} is always higher than that of case \uppercase\expandafter{\romannumeral3}. This is as expected because $K_A^1$ and $K_B^1$ play a leading role in the collision compared to $K_A^2$ and $K_B^2$. 
Case \uppercase\expandafter{\romannumeral4} reflects the learning of SRSA when Bob and Alice have different collision factors. The value of case \uppercase\expandafter{\romannumeral4} is higher than that of \uppercase\expandafter{\romannumeral2} and \uppercase\expandafter{\romannumeral3}, which indicates that the collision factor plays a smaller role than the knowledge itself in SC. These results also show that learning can improve SC quality with only partial background knowledge; however, there is an upper bound. 
On the other hand, the variance of case \uppercase\expandafter{\romannumeral1} continues decreasing, and the other three cases also stabilize after peaking quickly. This further suggests that learning can evolve continuously with full knowledge, but there is an upper limit to it only with partial or no knowledge. This can be treated as the cost resulting from the imperfect knowledge base.

\begin{figure*}[!t]
	\subfloat[]{\includegraphics[width=2.75in]{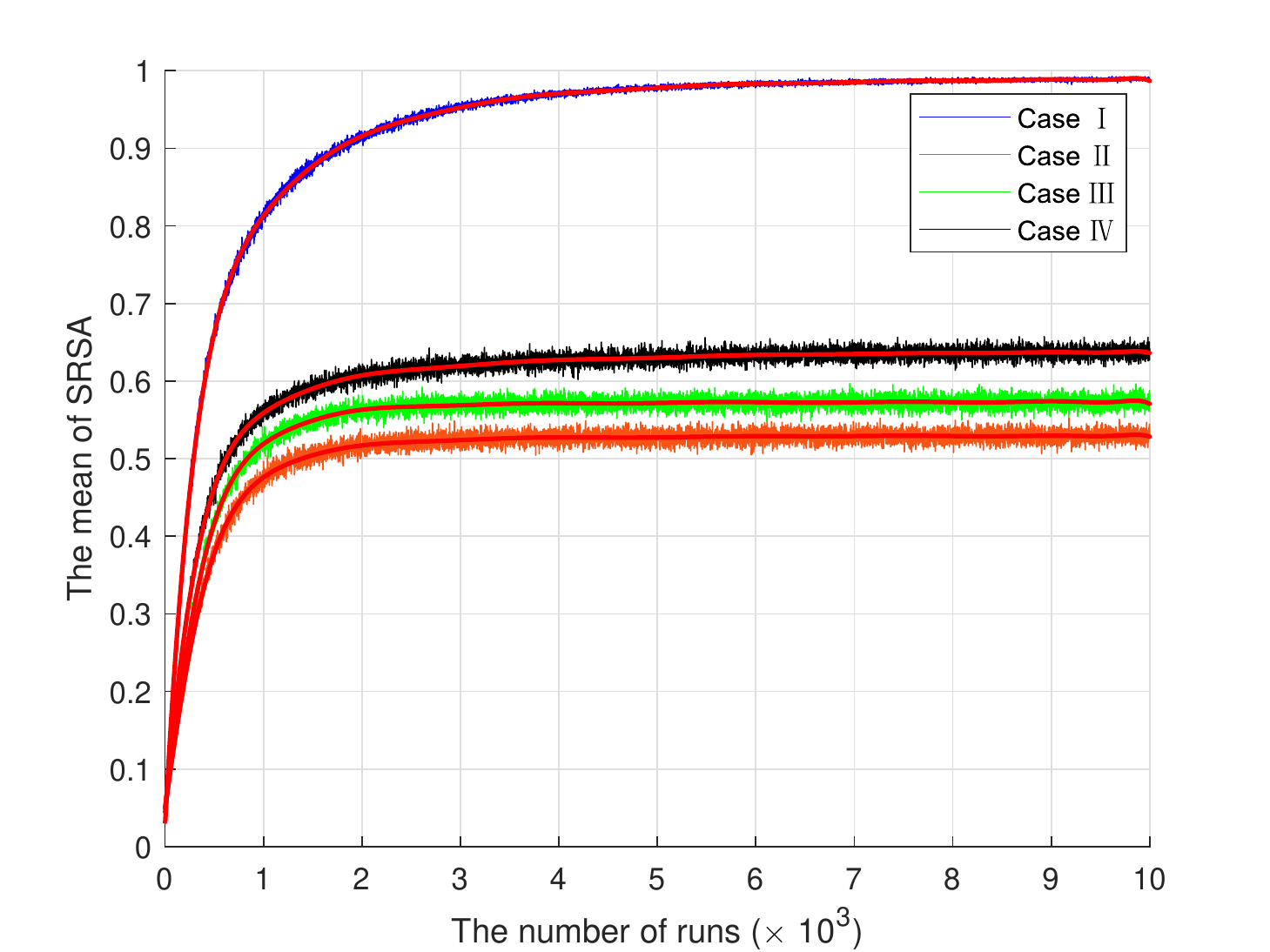}%
		\label{fig:2I}}
	\hfil
	\subfloat[]{\includegraphics[width=2.75in]{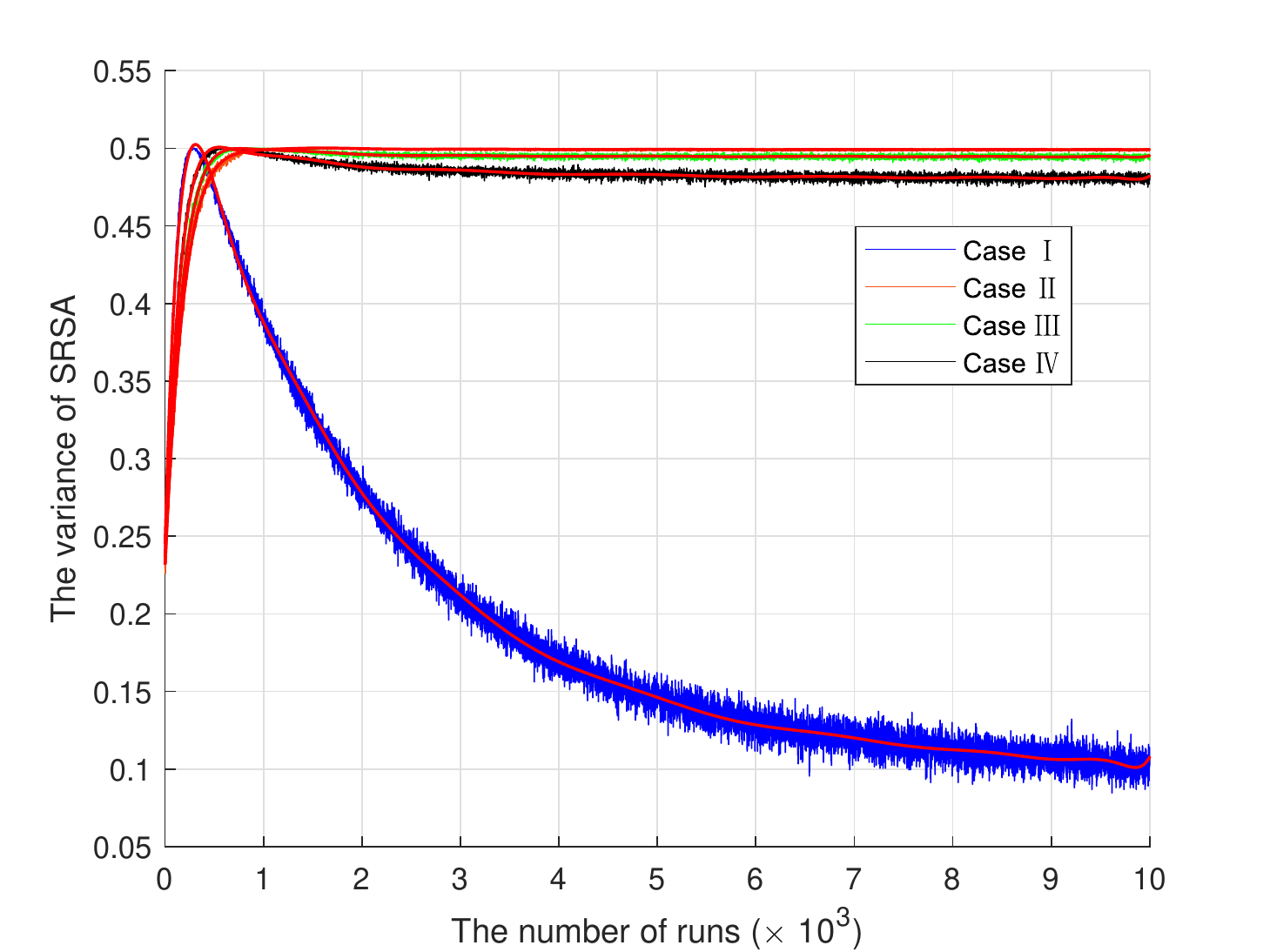}%
		\label{fig:2II}}
	\caption{ The simulation results of SRSA with the number of runs. We divide it into four cases. Case \uppercase\expandafter{\romannumeral1}: $K_B^1=K_A^1$, $K_B^2=K_A^2$ and $\beta=\alpha$; \uppercase\expandafter{\romannumeral2}: $K_B^1\neq K_A^1$ (with an error probability of 0.5), $K_B^2=K_A^2$ and $\beta=\alpha$; \uppercase\expandafter{\romannumeral3}: $K_B^1=K_A^1$, $K_B^2 \neq K_A^2$ (with an error probability of 0.5) and $\beta=\alpha$; \uppercase\expandafter{\romannumeral4}: $K_B^1=K_A^1$, $K_B^2=K_A^2$ and $\beta={1 \over 2} \alpha$. (a) The mean of SRSA. (b) The variance of SRSA.}
	\label{fig_sim}
\end{figure*}

\section{Conclusion}
\label{sec:conclusion} 
In this paper, we presented the concept of semantic expansion and knowledge collision in SC. It represents the combination and superposition of information by the sender. Based on semantic expansion, we further proposed a semantic communication system called EXK-SC. Moreover, semantic expansion corresponds to knowledge collision, which provides the possibility for the evolution and upgrading of communication systems. On the other hand, we reached some conclusions for semantic information theory in the context of asynchronous scaling up updates of knowledge, while obtaining some bounds for SC. Specifically, the receiver's understanding of the knowledge collision and updates determines the effectiveness of semantic communication.

It should be pointed out that in the near future, task-oriented semantic communication designs under this new framework may emerge as an interesting topic. Another topic of interest is how to set up the type class methods to provide more insights on the knowledge base formulation and the system design, which requires further study. Semantic communication is evolving towards intelligence. The insights gained from this work may be of assistance to the development of future semantic communication systems and 6G. It is also expected to pave the way for the design of next-generation real-time data networking and provide the foundational technology for a plethora of socially useful services, including autonomous transportation, consumer robotics, VR/AR, and the metaverse.

\vspace{6pt}

\section*{Acknowledgments}
We thank Prof. Choi for the help and source code in \cite{choi2022semantic}.


%

{\appendices
\section{Proof of the mutual information for three special  channel cases}
\label{sec:appendix}
\begin{enumerate}[I.]
	\item{\textbf{The explicit channel is noiseless.} Since $S=\hat{S}$, we have $I(S;\hat{S})=H(S)$ and $I(K_A;K_B|S,\hat{S})=I(K_A;K_B|S)$. Moreover, $S$ is a function of $\hat{S}$, it follow that $I(S;K_B|\hat{S})=0$. Similarly, $\hat{S}$ is also a function of $S$, we obtain $I(\hat{S};K_A|S)=0$.
		Thus, the formula (\ref{Eq:Lemma 1}) can be simplified to
		\begin{equation}
			I(T;\hat{T}) = H(S) + I(K_A;K_B|S),
		\end{equation}
	}
	\item{\textbf{The implicit channel is noiseless}. Since $K_A=K_B$, the mutual information between $T$ and $\hat{T}$ is written
		\begin{align}
			I(T;\hat{T}) &= I(S;\hat{S}) + I(K_A;K_B|S, \hat{S}) + I(S;K_B|\hat{S})+ I(\hat{S};K_A|S) \\
			&= I(S;\hat{S}) + H(K_A|S,\hat{S})+I(S;K_A|\hat{S})+I(\hat{S};K_A|S) \\
			&= I(S;\hat{S}) + H(K_A)-I(K_A;S,\hat{S}) +I(S;K_A|\hat{S})+I(\hat{S};K_A|S) \\
			&= I(S;\hat{S}) + H(K_A) - I(K_A;S)-I(K_A;\hat{S}|S) +I(S;K_A|\hat{S})+I(\hat{S};K_A|S) \\
			&= I(S;\hat{S}) + H(K_A) - I(K_A;S)+I(K_A;S|\hat{S}) \\
			&= I(S;\hat{S}) + H(K_A) - I(K_A;S;\hat{S}).
		\end{align}
	 It can also be expressed as
	 \begin{equation}
	 	I(T;\hat{T}) = I(S;\hat{S}) + H(K_B) - I(K_B;S;\hat{S}).
	 \end{equation}
		
	}
	\item{\textbf{Both explicit and implicit channels are noiseless, i.e.,} \bm{$S=\hat{S},K_A=K_B$}. In this setting, we have $I(K_A;K_B|S)=H(K_A|S)$. Thus, Eq. (\ref{Eq:speical case 1}) is written
		\begin{align}
			I(T;\hat{T}) &= H(S) + H(K_A|S)
			= H(K_A,S) 
			= H(T).
		\end{align}
		Similarly, we also obtain $I(T;\hat{T})=H(\hat{T})$.
	}
\end{enumerate}

\bibliographystyle{IEEEtran}
\bibliography{reference}

\vspace{-33pt}

\newpage

\vfill

\end{document}